\newtheorem{lemma}{Lemma}[section]
\newtheorem{proposition}{Proposition}[section]
\theoremstyle{definition}
\newtheorem{definition}{Definition}[section]
\theoremstyle{remark}
\newtheorem{remark}{Remark}[section]
\newcommand{\ad}{\underline a}
\newcommand{\bd}{\underline b}
\newcommand{\uu}{\overline u}
\newcommand{\ud}{\underline u}
\newcommand{\vu}{\overline v}
\newcommand{\vd}{\underline v}
\newcommand{\hu}{\overline H}
\newcommand{\hd}{\underline H}
\newcommand{\gu}{\overline G}
\newcommand{\eu}{\overline\eta}
\newcommand{\ed}{\underline\eta}
\newcommand{\cu}{\overline c}
\newcommand{\cd}{\underline c}
\newcommand{\du}{\overline d}
\newcommand{\ru}{\overline r}
\newcommand{\rd}{\underline r}
\numberwithin{equation}{section}
\begin{document}
\title{Direct \lq\lq Delay\rq\rq\ Reductions of the Toda Equation}
\author{Nalini Joshi}\thanks{The research reported in this paper was supported by the Australian Research Council Discovery Program grant \#DP0559019.}
\address{School of Mathematics and Statistics F07, The University of Sydney, NSW 2006 Australia}
\email{nalini@maths.usyd.edu.au}
\date{28 October 2008}
\subjclass[2000]{37K10,34K17}
\thanks{PACS: 02.30.Ik, 02.03.Ks}
\begin{abstract}
A new direct method of obtaining reductions of the Toda equation is described. We find a canonical and complete class of all possible reductions under certain assumptions. The resulting equations are ordinary differential-difference equations, sometimes referred to as delay-differential equations. The representative equation of this class is hypothesized to be a new version of one of the classical Painlev\'e equations. The Lax pair associated to this equation is obtained, also by reduction.
\end{abstract}
\maketitle
\section{Introduction}
Many papers have been written on the Toda equation
\begin{align}
\label{eq:toda}&\ \left\{\begin{array}{ll}
                                u_t&=u\,(\vu - v)\\
                                v_{t}&=2\,(u^2-\ud^2)
                                \end{array}
                                \right.
\end{align}
where $u=u_n(t)=u(n,t) $, $v=v_n(t)=v(n,t)$, $u_t=\partial u(n, t)/\partial t$, $\uu=u(n+1, t)$, $\ud=u(n-1,t)$, $\vu=v(n+1, t)$, $\vd=v(n-1,t)$. The familiar exponential form of the Toda equation
\begin{align}\label{eq:exp toda}
\frac{d^2\,Q_n}{dt^2}=\exp\bigl\{-(Q_n-Q_{n-1})\bigr\}-\exp\bigl\{-(Q_{n+1}-Q_{n})\bigr\}
\end{align}
can be obtained through the change of variables
\begin{align}\label{eq:transf toda}
u_n&=\frac12\,\exp\bigl\{-(Q_n-Q_{n-1})/2\bigr\}\\
v_n&=-\,\frac12\,\frac{dQ_{n-1}}{dt}
\end{align}
In either case, the solutions depend on two independent variables $(n, t)$. Equations that involve iterates in one independent variable and derivatives  in another are referred to as differential-difference equations. Below, we also use this term for equations in which iterates and derivatives in the {\em same} variable appear. 

The Toda equation is also an example of a \lq\lq lattice model\rq\rq. Such lattice models appear in many physical settings, ranging from the study of thermalization in metals to the study of cellular neural networks and optical lattices. For these applications, it is essential to understand the wide variety of solutions possible. Reductions provide one method of extending our knowledge of the space of known solutions.

We focus on reductions to equations involving only one independent variable
\begin{equation}\label{eq:dde1}
H_\eta=K(H, \hu, \hd), \quad H:{\mathbb R}\ \mapsto\ {\mathbb R}
\end{equation}
where $H_\eta=dH/d\eta$, $\hu=H(\eu, t)$, $\hd=H(\ed,t)$. Examples of such reductions of the Toda equation were obtained through Lie symmetry analysis by Levi and Winternitz \cite{lw:06}. 

\subsection{Background} Symmetry reductions of differential-difference equations have been studied by many authors. A comprehensive review can be found in \cite{lw:06}. Early results on symmetry reductions kept $n$ fixed while allowing $u$ and $t$ to be deformed by a method based on the classical approach for differential equations that was developed by Sophus Lie. An interesting variation on this standard approach was considered in \cite{qcs:92} by allowing $n$ to deform continuously in addition to $u$ and $t$. The resulting equations arising as reductions are ordinary differential-difference equations of the form (\ref{eq:dde1}). 

In \cite{qcs:92}, Quispel {\it et al} found an equation of the form (\ref{eq:dde1}) as a reduction of the Kac-van Moerbeke or Volterra equation. They showed that their reduced equation  becomes the classical first Painlev\'e equation \cite{ince} in a continuum limit. This equation was described as a \lq\lq delay-differential\rq\rq \footnote{We note that this reduced equation contains both retarded $H(\eta-1)$ and advanced terms $H(\eta+1)$, whilst the usual terminology limits the usage of the term \lq\lq delay\rq\rq\ to equations containing only retarded terms such as $H(\eta-1)$. } version of a Painlev\'e equation by Grammaticos {\em et al.} \cite{grm:93}. Other such equations were proposed by Grammaticos {\em et al.} \cite{grm:93} as delay Painlev\'e equations by using the criterion of the singularity confinement method. In \cite{lw:06}, other ordinary differential-difference equations were found as reductions of the Toda equation. We show that these are contained in our results. Moreover, we show that our results are complete under the assumptions given below.

\subsection{Direct Method} In 1989, Clarkson and Kruskal \cite{ck:89} used a  \lq\lq direct approach\rq\rq to find reductions of the Boussinesq equation and found new reductions which were not captured by the classical Lie symmetry approach.  This direct approach was later shown to be related to \lq\lq non-classical\rq\rq symmetries. We develop a direct approach to finding reductions of differential-difference equations. 

The most general form for a reduction is
\begin{equation}\label{eq:gen ansatz}
u(n, t)=U(n, t, H(\eta), G(\eta)),\ v(n, t)=V(n, t, H(\eta), G(\eta)),\quad \eta=\eta(n, t),
\end{equation}
where $H$ and $G$ form a coupled system of equations of the form (\ref{eq:dde1}).
For Equations (\ref{eq:toda}), it turns out to be sufficient to take the ansatz
\begin{subequations}\label{eq:ansatz}
\begin{align}
u(n, t)&=a(n, t) + b(n, t)\, H(\eta)\\
v(n, t)&=c(n,t) + d(n,t)\, G(\eta)
\end{align}
\end{subequations}
where $\eta=\eta(n, t)$. Central to our argument are the following rules (stated for $a$, $b$ and $H$ for conciseness, but they apply also to $c$, $d$ and $G$)
\begin{enumerate}
\item[Rule 1:]\label{rule1} If $a(n,t)=a_0(n, t) + b(n, t)\,\Gamma(\eta)$, then we can take $\Gamma\equiv 0$ w.l.o.g. by substituting $H(\eta)\mapsto H(\eta)-\Gamma(\eta)$.
\item[Rule 2:]\label{rule2} if $b(n,t)$ has the form $b(n,t)=b_0(n,t)\,\Gamma(\eta)$, then we can take $\Gamma\equiv 1$ w.l.o.g. by substituting $H(\eta)\mapsto H(\eta)/\Gamma(\eta)$.
\item[Rule 3:]\label{rule3} If $\eta(n,t)$ is determined by an equation of the form $\Gamma(\eta)=\eta_0(n,t)$, where $\Gamma$ is invertible, then we can take $\Gamma(\eta)=\eta$ w.l.o.g. by substituting $\eta\mapsto \Gamma^{-1}(\eta)$.
\end{enumerate}
\begin{definition}\label{defn: reduction transf}
Given non-zero, differentiable and invertible functions, $\Gamma(\eta)$, we refer to the transformations $H(\eta)\mapsto H(\eta)-\Gamma(\eta)$, $H(\eta)\mapsto H(\eta)/\Gamma(\eta)$, and $\eta\mapsto \Gamma^{-1}(\eta)$ as the {\em reduction transformations} on (\ref{eq:ansatz}).
\end{definition}
Our main result is given in Proposition \ref{prop:reduction} below.
\subsection{Outline of Results} 
In this paper, we obtain direct reductions of Equations (\ref{eq:toda}). The details of our direct method are given in Section 2. In Section 3, we show that the ans\"atze (\ref{eq:ansatz}) in fact represent the general case and can be assumed without loss of generality. A continuum limit of the resulting differential-difference equations is given in Section 4. In Section 5, we find corresponding reduction of the Lax pair for the Toda equation. Finally, a conclusion rounds off the paper.

\section{Direct Reduction of the Toda Equation}
\begin{proposition}\label{prop:reduction}
Suppose that the ans\"atze (\ref{eq:ansatz}) hold. Then the only possible non-linear second-order reduction of Equation (\ref{eq:toda}) of the form (\ref{eq:dde1}) that is unique up to reduction transformations of $H$, $G$ and $\eta$ is given by 
\begin{subequations}\label{eq:reduced toda}
\begin{align}\label{eq:reduced toda 1}
-c_0\,H+H_\eta&=H\bigl(\gu -G\bigr)\\
\label{eq:reduced toda 2}
p_0-c_0\,G+G_\eta&=2\,\bigl(H^2-\hd^2\bigr)
\end{align}
\end{subequations}
where the reduction is given by $\eta(n, t)=\nu(n)+\sigma(t)$, $\nu(n)$ being an arbitrary function, with 
\begin{equation}
\sigma(t)=\left\{\begin{array}{ll}
			\frac{1}{c_0}\,\log(c_0\,t+c_1)+c_2&{\rm if\ }c_0\not=0\\
			a_0\,t+a_1&{\rm otherwise}\\
			\end{array}\right.
\end{equation}
where $c_j$, $j=0,\ldots, 2$, $a_0$, $a_1$ are constants and the reductions of $u$, $v$ are given by the following two respective cases
\begin{enumerate}
\item Case $c_0\not=0$: 
\begin{subequations}
\begin{align}
\label{h trans 1}
u(n, t)&=\pm\,\frac{1}{c_0\,t+c_1}\,H(\eta)\\
\label{g trans 1}
v(n, t)&=-\,\frac{p_0}{c_0}\,\frac{1}{(c_0t+c_1)}+c_3+\frac{1}{c_0\,t+c_1}\,G(\eta)
\end{align}
\end{subequations}
\item Case $c_0=0$:
\begin{subequations}
\begin{align}
\label{h trans 2}
u(n, t)&=\pm\,a_0\,H(\eta)\\
\label{g trans 2}
v(n, t)&=p_0\,a_0^2\,t+p_1+a_0\,G(\eta)
\end{align}
\end{subequations}
with $p_0$, $p_1$ being constants.
\end{enumerate}
\end{proposition}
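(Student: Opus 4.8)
The plan is to substitute the ans\"atze~(\ref{eq:ansatz}) directly into the pair~(\ref{eq:toda}), expand in $H$, $G$ and their $\eta$-shifts $\hu,\hd,\gu,\gd$, and impose that the result be a closed, autonomous differential-difference system in the single variable $\eta$ --- i.e.\ of the type~(\ref{eq:dde1}) --- using the reduction transformations of Definition~\ref{defn: reduction transf} to remove the residual freedom.

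First I would record that under~(\ref{eq:ansatz}) the first Toda equation becomes $a_t+b_tH+b\,\eta_t\,H_\eta=(a+bH)\bigl(\cu-c+\du\,\gu-d\,G\bigr)$ and the second becomes $c_t+d_tG+d\,\eta_t\,G_\eta=2\bigl((a+bH)^2-(\ad+\bd\,\hd)^2\bigr)$, where $\gu=G(\eta(n+1,t))$, $\hd=H(\eta(n-1,t))$, and so on. Solving each for its $\eta$-derivative (this is where the qualifier ``nonlinear second-order'' is used, to divide by $b\eta_t$ and $d\eta_t$ and to discard linear reductions) and collecting the monomials in $H,G,\hu,\hd,\gu,\gd$ gives two expansions whose coefficients are explicit rational expressions in $a,b,c,d$, their $n$-shifts, their $t$-derivatives and $\eta_t$. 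For these to define a system of the type~(\ref{eq:dde1}), in which $K$ carries no explicit $(n,t)$ and hence no explicit $\eta$, every one of those coefficients must be constant on the whole $(n,t)$-plane; extracting the consequences of this is the core of the argument.

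From the genuinely nonlinear terms $H\gu$ and $HG$ of the first equation, $\du/\eta_t$ and $d/\eta_t$ are constant; Rule~2 (on $d,G$) normalises $d=\eta_t$, after which $\du/\eta_t=\eta_t(n+1,t)/\eta_t(n,t)$ is a constant, which the shift structure of~(\ref{eq:dde1}) (requiring $\eu-\eta$ to be $t$-independent) forces to equal $1$; hence $\eta_t$ depends on $t$ only and $\eta(n,t)=\nu(n)+\sigma(t)$ with $\nu$ arbitrary and $\sigma'=\eta_t$. The $\gu$-coefficient of the first equation is then proportional to $a/b$, which must therefore be constant, and a Rule~1 shift of $H$ sets it to zero; so we may take $a\equiv0$, and the first equation collapses to $-c_0H+H_\eta=H(\gu-G)$ with $c_0:=-b_t/(b\eta_t)+(\cu-c)/\eta_t$ constant. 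Putting $a\equiv0$, $d=\eta_t$ into the second equation, its $H^2$-coefficient $2b^2/(d\eta_t)$ equals $2$ after a Rule~2 rescaling of $H$, so $b=\pm\eta_t$ (hence $b$, like $\eta_t$, is independent of $n$, $\bd=b$, and the $\hd^2$-coefficient is automatically $-2$); matching the $G$- and constant coefficients gives $-d_t/d^2=c_0$ and $c_t=p_0\,d^2$, and comparing $c_0$ with its value from the first equation forces $\cu=c$, so $c$ is independent of $n$ as well. Thus the second equation becomes $p_0-c_0G+G_\eta=2(H^2-\hd^2)$, and we have reached~(\ref{eq:reduced toda}).

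It remains to integrate the two ordinary differential equations $\frac{d}{dt}(1/d)=c_0$ and $c_t=p_0d^2$ for the coefficient functions (recall $b=\pm d$, $\eta_t=d$, $\sigma'=d$): if $c_0\neq0$ then $d=1/(c_0t+c_1)$, so $\sigma(t)=\frac1{c_0}\log(c_0t+c_1)+c_2$ and $c=-\frac{p_0}{c_0(c_0t+c_1)}+c_3$, which reproduces~(\ref{h trans 1})--(\ref{g trans 1}); if $c_0=0$ then $d$ is a constant $a_0$, $\sigma(t)=a_0t+a_1$ and $c=p_0a_0^2t+p_1$, which reproduces~(\ref{h trans 2})--(\ref{g trans 2}); substituting $a,b,c,d$ back into~(\ref{eq:ansatz}) completes the proof. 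I expect the main obstacle to lie in the third paragraph: arguing cleanly that \emph{every} coefficient in the two expansions must be constant, keeping track of how each of Rules~1--3 acts on the surviving monomials without re-introducing dependence already removed, ruling out degenerate branches (such as a sign-alternating $\eta_t$), and confirming that $c_0=0$ versus $c_0\neq0$ is the only case split while $\nu(n)$ genuinely survives unconstrained.
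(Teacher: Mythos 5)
Your proposal is correct and follows essentially the same route as the paper: substitute the ans\"atze, require each coefficient of the monomials in $H$, $G$, $\hu$, $\hd$, $\gu$ to depend on $(n,t)$ only through $\eta$ (you sharpen this to ``constant''), normalise $d=\eta_t$, $\overline{\eta_t}=\eta_t$, $b=\pm\sigma'$, $a\equiv 0$ via Rules 1--3, and integrate $\sigma''=-c_0(\sigma')^2$ and $\gamma'=p_0(\sigma')^2$. The one substantive divergence is where you deduce $\cu=c$ by cross-comparing the two occurrences of $c_0$ --- which tacitly assumes the linear coefficients in the two reduced equations coincide --- whereas the paper writes $(\cu-c)/\eta_t$ as a difference $\overline{\Gamma}-\Gamma$ of a function of $\eta$ and absorbs $\Gamma$ into $G$ by Rule~1, thereby also disposing of the residual branch (the two coefficients differing by a constant) that you correctly flag as the remaining obstacle.
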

\begin{proof}
Under the ans\"atze (\ref{eq:ansatz}), the Toda equation becomes
\begin{subequations}
\begin{align}\label{eq: toda 1}
a_t+b_t\,H+b\,\eta_t\,H_\eta&=(a+b\,H)\bigl(\cu-c+\du\,\gu-d\,G\bigr)\\
\label{eq: toda 2}
c_t+d_t\,G+d\,\eta_t\,G_\eta&=2\bigl(a^2-{\ad}^2+2\,a\,b\,H-2\ad\,\bd\hd+b^2\,H^2-\bd^2\hd^2\bigr)
\end{align}
\end{subequations}
In the following we indicate generic functions of $\eta$ (which are assumed to be differentiable, non-zero and invertible) by the notation $\Gamma_j(\eta)$. 

Since we seek non-linear reduced equations of the form (\ref{eq:dde1}), we require the terms $H\gu$ and $HG$ to be present in the reduced equation (\ref{eq: toda 1}). Therefore, we require that 
\begin{subequations}
\begin{eqnarray}\label{eq:b1}
b\eta_t\,\Gamma_1(\eta)&=&b\,d\\
\label{eq:d1}d\,\Gamma_2(\eta)&=&\du
\end{eqnarray}
\end{subequations}
Consider Equation (\ref{eq:b1}), which implies $d=\eta_t\,\Gamma_1(\eta)$. However, by Rule 2, this implies $\Gamma_1(\eta)\equiv 1$ and, therefore, $d=\eta_t$ w.l.o.g. 

Using this in Equation (\ref{eq:d1}), we obtain $\overline{\eta_t}=\eta_t\Gamma_2(\eta)$. However, by taking a change of variables
\[
\eta=\Omega\bigl(\xi(n,t)\bigr),\ {\rm where\ }\ \overline\Omega_\xi=\Omega_\xi\,\Gamma_2\bigl(\Omega(\xi)\bigr),
\]
we can take $\overline{\eta_t}=\eta_t$, w.l.o.g. This implies that $\eta_t=\sigma'(t)$, for some differentiable function $\sigma(t)$, and therefore, $\eta(n,t)=\nu(n)+\sigma(t)$, where $\nu(n)$ is an arbitrary function.

Now consider the second equation (\ref{eq: toda 2}). Requiring that the terms $H^2$ and $G_\eta$ both remain in the reduced equation, we find that we must have
\[
b^2=d\eta_t\,\Gamma_3(\eta)\ \Rightarrow\ b^2=(\sigma'(t))^2,\ {\rm w.l.o.g.}
\]
where we have used Rule 2 once again. Therefore, we have $b=\pm\,\sigma'(t)$.

Now consider the linear terms in Equation (\ref{eq: toda 1}). In particular, if we require that terms linear in $\gu$ remain in the reduced equation, we must have
\[
a\,\du=b\eta_t\Gamma_4(\eta)\ \Rightarrow\ a=\pm\sigma'(t)\Gamma_4(\eta)=b\,\Gamma_4(\eta)\ \Rightarrow\ a\equiv 0, \ {\rm w.l.o.g.}
\]
by an application of Rule 1. 

If, on the other hand, the linear term in $H$ on the left side of the equation remains in the reduced equation, then
\[
b_t=bd\,\Gamma_5(\eta)\ \Rightarrow\ \sigma''=(\sigma')^2\,\Gamma_5(\eta).
\]
However, since $\sigma$ only depends on $t$, while $\eta$ also depends on $n$, this can only hold if $\Gamma_5$ is identically constant. Letting this constant be $-c_0$, we find $\sigma''(t)=-c_0\,(\sigma'(t))^2$. We integrate this ODE for $\sigma(t)$ to find
\[
\sigma(t)=\left\{\begin{array}{ll}
			\frac{1}{c_0}\,\log(c_0\,t+c_1)+c_2&{\rm if\ }c_0\not=0\\
			a_0\,t+a_1&{\rm if\ }c_0=0\\
			\end{array}\right.
\]
where $c_j$, $j=0,1, 2$ and $a_0$, $a_1$ are arbitrary constants.

Finally, if the linear term in $H$ on the right side of Equation (\ref{eq: toda 1}) remains, then
\[
b\,(\cu -c)=b\,\eta_t\,(\overline\Gamma_6-\Gamma_6)
\]
where we have taken the liberty of writing the arbitrary function of $\eta$ on the right as the difference of another such function. This leads to
\[
\cu-c=\sigma'\,(\overline\Gamma_6-\Gamma_6)\ \Rightarrow\ c=\sigma'\Gamma_6(\eta)+\gamma(t),
\]
where $\gamma(t)$ is an arbitrary function of $t$. However, since $\sigma'=d$, we now have $c=d\Gamma_6(\eta)+\gamma(t)$. By Rule 2, we can take $\Gamma_6\equiv 0$ and, therefore, $c=\gamma(t)$ w.l.o.g.

The reduced equations (\ref{eq: toda 1}-\ref{eq: toda 2}) are now
\begin{align*}
-c_0\,H+H_\eta&=H\,\bigl(\gu-G\bigr)\\
\frac{\gamma'(t)}{(\sigma'(t))^2}-c_0\,G+G_\eta&=2\bigl(H^2-\hd^2\bigr)
\end{align*}

Since this equation can only contain coefficients that are functions of $\eta$, we must have $\gamma'=p_0\,(\sigma')^2$, where $p_0$ is a constant. That is, 
\[
\gamma=\left\{\begin{array}{ll}
			-\,\frac{p_0}{c_0}\,\frac{1}{(c_0t+c_1)}+c_3,&{\rm if\ }c_0\not=0\\
			p_0\,a_0^2\,t+p_1&{\rm if\ }c_0=0\\
			\end{array}\right.
\]
where $c_3$ and $p_1$ are constants. 
\end{proof}
\begin{remark}
Our reduced equations (\ref{eq:reduced toda}) form a system of differential difference equations, which evolves on a sequence of domains containing points 
$${\mathcal P}=\bigl\{\eta_0,\, \overline{\eta_0},\, \overline{\overline{\eta_0}}, \, \ldots\bigr\},$$
where if $\eta_0=\nu(n)+\sigma(t)$, then $\overline{\eta_0}=\nu(n+1)+\sigma(t)$. In the interior of any domain in $n$, where the mapping $\nu(n)\mapsto \nu(n+1)$ is defined, we get a semi-infinite chain of points ${\mathcal P}$ and a corresponding sequence of domains on which these iterates are defined. Since $\nu(n)$ is an arbitrary function, we have an infinite-dimensional family of reductions. 
\end{remark}
\begin{remark}
We note that two cases of non-linear reductions were found by Levi and Winternitz \cite{lw:06}.  In these cases, the new independent variable, called $y$ in their paper, is given respectively by (A) $y=t\,\exp(-\alpha\,n)$ or (B) $y=t-\alpha\,n$, where $\alpha$ is a constant. Case (A), after taking $\log y$ as a new variable is the sub-case $c_0=1$, $c_1=0$, $c_2=0$ of the above result. Case (B), is the sub-case $c_0=0$, $a_0=1$, $a_1=0$ of the above result. In both cases, $\nu(n)=-\alpha\,n$.
\end{remark}

\section{Generalization of Ans\"atze}
Here we show how the ans\"atze (\ref{eq:ansatz}) represent the general case. 
Consider the general reduction
\begin{subequations}
\begin{align}\label{eq:general ansatz 1}
u(n, t)&=c\bigl(n, t, H(\eta(n,t)), G(\eta(n,t))\bigr)\\
\label{eq:general ansatz 2} v(n, t)&=d\bigl(n, t, H(\eta(n,t)), G(\eta(n,t))\bigr)
\end{align}
\end{subequations}
Under these transformations, the Toda equation becomes
\begin{subequations}
\begin{align}\label{eq: general toda 1}
c_t+c_H\,H_\eta\,\eta_t+c_G\,G_\eta\,\eta_t&=c\,(\du -d)\\
\label{eq: general toda 2} d_t+d_H\,H_\eta\,\eta_t+d_G\,G_\eta\,\eta_t&=2\,(c^2-\cd^2)
\end{align}
\end{subequations}
For the reduced equation to each contain the non-linear terms in $H$, $G$, we require
\begin{subequations}
\begin{align}\label{step 1}
\du&=d\,\Gamma_1(\eta, H, G)\\
\label{step 2}
c^2&=\cd^2\,\Gamma_2(\eta, H, G)
\end{align}
\end{subequations}
Rewriting $\Gamma_1$ and $\Gamma_2$ in these equations appropriately, we can sum each to get
\[
c=e(t)\,\Gamma_3(\eta, H, G),\ d=f(t)\,\Gamma_4(\eta, H, G).
\]
Therefore, 
\begin{align*}
c_H\eta_t&=e(t)\,\Gamma_4(\eta, H, G)\,\eta_t=e(t)\,\Gamma_3(\eta, H, G)\,f(t)\Gamma_4(\eta, H, G)\\
&\Rightarrow\ c_H=g(n, t)\, \Gamma_5(\eta, H, G)\\
& \Rightarrow\ c=h(n, t)\, \Gamma_6(\eta, H, G)+k(n, t).
\end{align*}
Similarly, we find
\[
d=r(n, t)\,\Gamma_7(\eta, H, G)+s(n, t).
\]
By defining $\Gamma_6$ and $\Gamma_7$ to be new variables, if necessary, we reproduce the ans\"atze assumed earlier as (\ref{eq:ansatz}).
\section{Continuum Limits}
To find continuum limits, it is useful to first convert the system of equations (\ref{eq:reduced toda}) to a single scalar equation. 
\begin{lemma}
Define $r(\eta)=2\,\log\bigl(2\,H(\eta)\bigr)$. Then $r(\eta)$ satisfies
\begin{equation}\label{eq: r}
r_{\eta\eta}=c_0\bigl(r_\eta-2\,c_0\bigr)+\,\bigl(\exp(\ru)-2\,\exp(r)+\exp(\rd)\bigr)
\end{equation}
\end{lemma}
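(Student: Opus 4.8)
The plan is to eliminate $G$ from the system (\ref{eq:reduced toda}) in favour of $H$ alone, and then rewrite the resulting scalar equation in the variable $r$. First I would solve (\ref{eq:reduced toda 1}) for the first forward difference of $G$:
\[
\gu-G=\frac{H_\eta}{H}-c_0=(\log H)_\eta-c_0 .
\]
Since $r(\eta)=2\log\bigl(2H(\eta)\bigr)$ gives $H=\tfrac12\exp(r/2)$ and hence $(\log H)_\eta=r_\eta/2$, this becomes
\[
\gu-G=\tfrac12\,r_\eta-c_0 .
\]

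Next I would apply the lattice shift $n\mapsto n+1$ to (\ref{eq:reduced toda 2}) — this shift commutes with $d/d\eta$ because $\eta=\nu(n)+\sigma(t)$ depends on $n$ and $t$ additively — to obtain
\[
p_0-c_0\,\gu+\gu_\eta=2\bigl(\hu^2-H^2\bigr),
\]
and then subtract (\ref{eq:reduced toda 2}) from this. The constant $p_0$ cancels, and using $(\gu-G)_\eta=\gu_\eta-G_\eta$ I am left with
\[
-c_0\,(\gu-G)+(\gu-G)_\eta=2\bigl(\hu^2-2H^2+\hd^2\bigr).
\]

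Finally I would substitute $\gu-G=\tfrac12 r_\eta-c_0$ on the left-hand side and $H^2=\tfrac14\exp(r)$, $\hu^2=\tfrac14\exp(\ru)$, $\hd^2=\tfrac14\exp(\rd)$ on the right-hand side. The left-hand side collapses to $\tfrac12 r_{\eta\eta}-\tfrac12 c_0 r_\eta+c_0^2$ and the right-hand side to $\tfrac12\bigl(\exp(\ru)-2\exp(r)+\exp(\rd)\bigr)$; multiplying through by $2$ and rearranging yields exactly (\ref{eq: r}). There is no serious obstacle here — the argument is a one-step elimination — so the only points requiring care are that the shift genuinely commutes with $\partial_\eta$ on the chain of domains $\mathcal P$ described in the Remark, and that $H$ is taken in the branch with $H>0$ (consistent with the sign choices in (\ref{h trans 1}) and (\ref{h trans 2})) so that $r=2\log(2H)$, and hence the computation, is well defined; equivalently one may read the whole calculation as an identity obtained by substituting $H=\tfrac12\exp(r/2)$ into (\ref{eq:reduced toda}).
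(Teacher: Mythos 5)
Your proof is correct and is essentially the paper's own argument: a direct calculation that uses $\gu-G=H_\eta/H-c_0=\tfrac12 r_\eta-c_0$ from the first equation and the shifted second equation to eliminate $G$, then rewrites $H^2=\tfrac14\exp(r)$. (Your extra remarks on the positive branch of $H$ and the shift commuting with $\partial_\eta$ are sensible but not needed beyond what the paper implicitly assumes.)
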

\begin{proof}
The proof is by direct calculation. 
\begin{align*}
r(\eta)=2\,\log(2\,H)\, &
\Rightarrow\ r_\eta=2\,\frac{H_\eta}{H}=2\left(c_0+\gu-G\right)\\
&\Rightarrow\ r_{\eta\eta}=2\,c_0\bigl(\gu-G\bigr)+4\,\bigl({\hu}^2-H^2\bigr)-2\,\bigl(H^2-{\hd}^2\bigr)
\end{align*}
which gives the desired result upon using $H^2=\exp(r)/4$.
\end{proof}
Assume for simplicity that $\overline\eta=\nu_0+\eta$. Then a continuum limit of Equation (\ref{eq: r}) is obtained by assuming 
\begin{align*}
r=\epsilon^2\,w(z),\ 
z=\epsilon\,\eta,\ 
\nu_0^2=1,\ 
c_0=\epsilon^4\,K
\end{align*}
as $\epsilon\to0$. Then $w(z)$ satisfies 
\begin{equation}
w_{zz}+6\,w^2+\alpha\,z+\beta=0
\end{equation}
where $\alpha$ and $\beta$ are integration constants. Note that this equation is a scaled and translated version of the classical first Painlev\'e equation. 

\section{Lax pair}
We apply the reduction found in \S 2 to the Lax pair for the Toda equation given by Wadati and Toda \cite{wt:75}:
\begin{subequations}
\begin{align}\label{lax 1}
&u\,\overline\psi+\ud\,\underline\psi+v\,\psi=\lambda\,\psi\\
\label{lax 2}&\psi_t=u\overline\psi-\ud\,\underline\psi
\end{align}
\end{subequations}
Using the results of Proposition \ref{prop:reduction} for the case $c_0\not=0$, we find from Equation (\ref{lax 1})
\begin{align}
\pm\,\sigma'\,H\,\overline\psi+\pm\,\sigma'\,\hd\,\underline\psi+\biggl\{\Bigl(-\,\frac{p_0}{c_0}\,\sigma'+c_3\Bigr)+\sigma'\,G\biggr\}\,\psi=\lambda\,\psi
\end{align}
In order for this to be a reduced equation containing coefficients that are only functions of $\eta$, we require $c_3=0$ and need to choose a definite sign in $b=\pm\sigma'$, say the positive sign. Furthermore, we define a new spectral variable $\zeta=\lambda/\sigma'$. Then we obtain for $\phi(\eta, \zeta)=\psi(n, t)$,
\[
\psi_t=\zeta_t\,\phi_\zeta +\sigma'\,\phi_\eta = \sigma'\,(c_0\,\zeta\phi_\zeta+\phi_\eta).
\]
Hence we get the reduced Lax pair
\begin{subequations}\label{reduced lax}
\begin{align}\label{reduced lax 1}
&H\,\overline\phi+\hd\,\underline\phi+\Bigl(G-\,\frac{p_0}{c_0}\Bigr)\,\phi=\zeta\,\phi\\
\label{reduced lax 2}&c_0\,\zeta\phi_\zeta+\phi_\eta=H\overline\phi-\hd\,\underline\phi
\end{align}
\end{subequations}
We note that the character of the Lax pair has changed from a spectral problem to a monodromy problem, since now derivatives in $\zeta$ also appear in the linear problem. By differentiating Equation (\ref{reduced lax 1}) in two different ways, once with respect to $\zeta$ and once with respect to $\eta$, and using (\ref{reduced lax 2}) to replace $\phi_\zeta$, while using (\ref{reduced lax 1}) to replace $\overline\phi$, we can show that the compatibility conditions for the linear system (\ref{reduced lax}) are precisely Equations (\ref{eq:reduced toda}). 
\begin{remark}
We note that the classical first Painlev\'e equation has no explicit solutions expressible in terms of previously known functions. There is no reason to believe that Equations (\ref{eq:reduced toda}) should possess any such explicit solutions. However, the reduced equations (\ref{eq:reduced toda}) do have the solution $H\equiv 0$, $G\equiv 0$, if $p_0=0$. The corresponding solution of the Lax pair (\ref{reduced lax}) is $\phi\equiv 0$. These results underly the solvability of the reduced equation and the reduced Lax pair. Local existence of solutions  to respective initial value problems can be inferred from the standard existence theorems for differential-difference equations.
\end{remark}

\section{Concluding Remarks}
We have developed a new direct method of obtaining reductions of differential-difference equations and applied this method to the Toda equation (\ref{eq:toda}).The completeness of our results in the class of \lq\lq reduction transformations\rq\rq\  (see Definition \ref{defn: reduction transf}) is shown in two steps. First,  in Section 2, we found the most general possible reduction subject to the ans\"atze (\ref{eq:ansatz}). Second, in Section 3, we showed that the ans\"atze assumed in fact represent the general case. A continuum limit of the reduced equations is found to be related to the first Painlev\'e equation in Section 4. Finally, in Section 5, we showed that the reduced equations inherit a linear problem from the Toda equation. 

The reductions we have found through the direct method appear to be closely related to those obtained by the Lie-symmetry approach \cite{lw:06}. However, we note that in the latter approach, the iterations of the independent variable $\eta$ are assumed to be the simple ones described by $\overline\eta=\eta+\alpha$, or $\overline\eta=\alpha\,\eta$, where $\alpha$ is a constant. No such assumption is made in our approach. The reduced equations (\ref{eq:reduced toda}) allow a more general functional iteration of $\eta$ to appear. We also note that neither a continuum limit of these reductions or their linear problem has been found before. 

The inheritance of a linear problem through the reduction indicates that Equations (\ref{eq:reduced toda}) are integrable. These results suggest that the system (\ref{eq:reduced toda}) is analogous to the well known reductions of completely integrable partial differential equations, namely the classical Painlev\'e equations. However, questions remain open on how close such an analogy might be.

The solutions of the Painlev\'e equations have a characteristic complex analytic structure in the complex plane of their independent variable. Formal Laurent expansions of the solutions of the Toda equation and one of its reductions has been discussed in \cite{rgt:93}. However, while Equations (\ref{eq:reduced toda}) form a first-degree, second-order system of differential equations in $\eta$, they form a second-degree, second-order system when considered as a system of difference equations. This introduces a multi-valuedness into the system when it is solved for the highest iterate of $H$, which makes any proof of meromorphicity of the solutions slightly delicate. Moreover, the simultaneous presence of iterates of $\phi$ and derivatives of $\phi$ in $\eta$ makes the linear system (\ref{reduced lax}) a type of monodromy problem that has not been studied before.

\end{document}